\def\be{\begin{equation}}
\def\ee{\end{equation}}
\def\bea{\begin{eqnarray}}
\def\eea{\end{eqnarray}}
\def\bma{\begin{mathletters}}
\def\ema{\end{mathletters}}
\def\bi{\begin{itemize}}
\def\ei{\end{itemize}}
\newtheorem{thm}{Theorem}
\newtheorem{lem}{Lemma}[thm]
\newcommand{\ket}[1]{ | \, #1  \rangle}
\begin{document}

\title{Local cloning of CAT states}
\author{Ramij Rahaman}
\email{Ramij.Rahaman@ii.uib.no} \affiliation{Department of Informatics, University of Bergen, Post Box-7803, 5020, Bergen,
Norway }

\begin{abstract}
In this Letter we analyze the (im)possibility of the exact cloning
of orthogonal three-qubit CAT states under local operation and
classical communication(LOCC) with the help of a restricted entangled
state. We also classify the three-qubit CAT
states that can(not) be cloned under LOCC restrictions and extend the results to the n-qubit case.

\end{abstract}

\pacs{03.67.Hk, 03.67.Mn}

\maketitle

\section{Introduction}

Quantum cloning and quantum state distinguishability are two
important and closely related areas in quantum information theory. The
no cloning theorem \cite{wootters} says that only orthogonal
quantum states can be cloned perfectly in principle and only when the required operations are performed on the entire
system. But if we restrict ourselves to LOCC(Local Operation
and Classical Communication) then further restrictions on cloning
apply. For example, the very obvious first restriction will be that an
entangled blank state is needed to clone an entangled state. The
LOCC cloning of entangled states was first introduced by Ghosh
\emph{et al.} \cite{ghosh2004} where it was shown that any two orthogonal
two qubit maximally entangled states can be cloned under LOCC with the
help of one ebit of entanglement. However, there exist two
orthogonal two qubit non-maximal entangled states which can not be
cloned under LOCC with the help of one(or less) ebit of entanglement
\cite{kay,choudhary}. So the relation between amount of entanglement present in the unknown states and possibility of cloning of
those entangled states under LOCC is not simple as sometimes LOCC cloning of
two qubit less entangled states of a particular type is not possible with any two qubit
entangled state as a blank copy. Thus the study of LOCC cloning of orthogonal entangled states becomes
interesting when the ensemble states are not maximal. This study
is also important because LOCC cloning is very
closely connected with many important information processing
tasks, such as channel copying, quantum key distribution, error
correction, entanglement distillation and is also helpful in
understanding the nonlocality of a set of states\cite{owari,bennett}. Most work in this area is for
bipartite (maximally and nonmaximally) entangled states
\cite{ghosh2004,owari,owari1,anselmi,choudhary,kay,ramij,GYC10}.
Although some results are known for local copying of a set of
orthogonal three qubit maximally entangled states
\cite{choudhary1}, no result is known for non-maximally entangled states. We, in
this letter concentrate on LOCC cloning of orthogonal \emph{CAT states}\footnote{states of the form
  $\ket{\psi}=\cos{\alpha}\ket{\phi}+\sin{\alpha}\ket{\eta}$, where
$\langle \phi|\eta\rangle=0$ and $\alpha\in (0,\pi/4)$, \emph{i.e.}, an unequal superposition of two orthonormal basis states.}\cite{GKRSS00}. \\

We organize our paper as follows: in section II, we prove that no
set of orthogonal three qubit CAT states can be cloned by LOCC
with the help of any three qubit CAT state having the same amount
of entanglement in the blank copy. We also prove that the
cloning remains impossible for the set of states having a pair of kind (\ref{3ghzI}) even if we supply
any three qubit state as a blank copy. In section III, we extend
the result to n-qubits and identify the class for which LOCC
cloning is possible with the help of a GHZ state. The possibility
of LOCC cloning of a pair of
orthogonal n-qubit GHZ state is also discussed.\\

\section{Cloning of three qubit CAT states}
The full orthogonal canonical set of three qubit CAT states can be
written as (up to a global phase):
\begin{equation}
\label{GHZs} \ket{\Psi_{p,i,j}(\alpha)}_{123} =(\cos \alpha)^{1-p}
(\sin\alpha)^{p} ~\ket{0~i~j} ~+~ (-1)^{p} ~(\cos \alpha)^{p}
(\sin\alpha)^{1-p} ~ \ket{1~\overline{i}~ \overline{j}},
\end{equation}
where $0<\alpha <\frac{\pi}{4}$ and $p, i, j \in \{ 0, 1\}$,
also the bar over a bit value indicates its logical negation. For
$\alpha=\pi/4$, or $\alpha=0$ the state $\ket{\Psi_{p,i,j}(\alpha)}_{123}$ (given
in (\ref{GHZs})) represents a three qubit GHZ, or a product state respectively.
%, and
%for all other values of $\alpha \in (0,\pi/4)$, with $0\neq \alpha \neq \pi/4$ that will be in general CAT state.
%We use this notation of $\alpha$ throughout the following discussion.
\\

From a recent result \cite{GYC10} on local cloning of bipartite entangled states one can say that the necessary condition for LOCC cloning of a set of pure n-qubit entangled states with the help of a known n-qubit pure entangled
state is that all states belonging to this set must have equal entanglement. This is due to the fact that, if the LOCC cloning of n-partite states is impossible with $n-1$ cooperating parties, it remains impossible with n independent parties. So for LOCC cloning we always consider a set which contains states with equal entanglement.\\

By appropriate basis transformation any pair of the above (CAT)states can be reduced either to the form:\\

\noindent{\textbf{I}}
{\setlength\arraycolsep{0.01em}
\begin{eqnarray}
\label{3ghzI}
\ket{\Psi_{0ij}(\alpha)}_{123} &=& \cos \alpha \ket{0~i~j}
+ \sin \alpha \ket{1~\overline{i}~ \overline{j}}\nonumber\\
\ket{\Psi_{1ij}(\alpha)}_{123} &=& \sin \alpha \ket{0~i~j} - \cos
\alpha \ket{1~\overline{i}~ \overline{j}}
\end{eqnarray}}
where, $0< \alpha < \frac{\pi}{4}$ and
$i,j\in \{ 0, 1\}$.\\

~~~~~~~~~~~~~~~~~~~~~~~~~~~~~~~~~~~~~~~~~~~~~~~~~~~~~~~or to the form:,\\

\noindent{\textbf{II}}
\begin{eqnarray}
\label{3ghzII}
\ket{\Psi_{0ij}(\alpha)}_{123} &=& \cos \alpha \ket{0~i~j}
+ \sin \alpha \ket{1~\overline{i}~ \overline{j}}\nonumber\\
\ket{\Psi_{0kl}(\alpha)}_{123} &=& \cos \alpha \ket{0~k~l} + \sin
\alpha \ket{1~\overline{k}~ \overline{l}}
\end{eqnarray}
where, $0< \alpha < \frac{\pi}{4}$;  $
i,j,k,l\in \{ 0, 1\}$ and $(i,j)\ne(k,l)$. It needs to be very clear that every pair of states can only be reduced in one of the forms not both.\\

\begin{thm} No \emph{set}\footnote{In this article whenever we mention a `\emph{set}', which always means a set having more than one elements, \emph{i.e.}, a non trivial set.} of orthogonal three qubit \emph{CAT states}\footnote{CAT state, or a set of CAT states, henceforth, we will mean states of kind (\ref{GHZs}) with $0<\alpha< \pi/4$.} can be cloned
by LOCC with the help of a three qubit CAT state having the same
amount of entanglement.\end{thm}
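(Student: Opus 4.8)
The plan is to reduce the multiparty statement to a small number of bipartite obstructions and then exploit that a local operator cannot raise the Schmidt rank across a cut. First I would note that it suffices to treat an arbitrary orthogonal \emph{pair}: any protocol cloning a set also clones each of its two-element subsets, and by the reduction preceding the theorem every such pair is, up to local unitaries, of type~(\ref{3ghzI}) or type~(\ref{3ghzII}). Fix a pair $\{\ket{\psi_0},\ket{\psi_1}\}$ and a putative exact deterministic LOCC cloner with blank $\ket{B}$, written through its branch (Kraus) operators $M_k$. Across any bipartite cut $M_k$ factorizes as $A_k\otimes B_k$, and in the full three-party splitting as a product of three single-party operators. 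Since for every input the output is the \emph{pure} state $\ket{\psi_s}\otimes\ket{\psi_s}$, each branch term must be proportional to that output,
\be
M_k\,\bigl(\ket{\psi_s}\otimes\ket{B}\bigr)=c_{k,s}\,\ket{\psi_s}\otimes\ket{\psi_s},\qquad \sum_k|c_{k,s}|^2=1,
\ee
for $s=0,1$. I would then use two elementary facts: $M_k$ is linear, so it acts coherently on any fixed combination $\lambda_0\ket{\psi_0}+\lambda_1\ket{\psi_1}$; and a product operator never increases the Schmidt rank across a cut. I emphasize that entanglement is \emph{conserved} across the relevant cut (input and target share the same Schmidt spectrum), so the impossibility cannot come from an entanglement-monotone count and must be extracted from the detailed operator constraints.

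For type~(\ref{3ghzI}) I would use the cut $1|23$. The vectors $\ket{0ij}$ and $\ket{1\,\overline{i}\,\overline{j}}$ are products across this cut and lie in the span of the pair, and for \emph{any} three-qubit blank $\ket{B}$ the state $\ket{0ij}\otimes\ket{B}$ has Schmidt rank at most $2$ across $1|23$ (party~$1$ is a single qubit). By linearity its image under $M_k$ is a fixed combination of $\ket{\psi_0}^{\otimes2}$ and $\ket{\psi_1}^{\otimes2}$; both share one Schmidt basis across the cut, so this image is diagonal there, and demanding rank $\le2$ fixes the ratio $c_{k,1}/c_{k,0}=\cot\alpha$. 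Repeating with $\ket{1\,\overline{i}\,\overline{j}}$ forces instead $c_{k,1}/c_{k,0}=-\tan\alpha$. For $\alpha\in(0,\pi/4)$ these are incompatible unless $c_{k,0}=c_{k,1}=0$, so every branch is trivial and no cloner exists. Because $\ket{B}$ entered only through the bound ``Schmidt rank $\le2$,'' this already yields the stronger claim announced for~(\ref{3ghzI}): impossibility with \emph{any} three-qubit blank.

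Type~(\ref{3ghzII}) is the crux, and I expect it to be the main obstacle. Choosing the cut that isolates the qubit on which the two states differ, the pair reduces to the effective two-qubit states $\cos\alpha\ket{00}+\sin\alpha\ket{11}$ and $\cos\alpha\ket{01}+\sin\alpha\ket{10}$, which differ by a local bit flip. The two product vectors now present in their span are $(\cos\alpha\ket{0}\pm\sin\alpha\ket{1})\otimes(\ket{0}\pm\ket{1})$, and \emph{both} return the single, \emph{satisfiable} constraint $|c_{k,0}|=|c_{k,1}|$; the Schmidt-rank test is therefore inconclusive, which is precisely why this case is harder.

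To close type~(\ref{3ghzII}) I would bring in the genuinely tripartite structure rather than a single bipartite cut. On the two-party side of the cut the branch operator is a product of two single-party operators, and the subspace actually occupied by the states is the ``diagonal'' one in which those two qubits are perfectly correlated; requiring $M_k$ to preserve this subspace while realizing the two cloning identities forces each of the two single-party operators to be a generalized-permutation (monomial) operator with a prescribed pattern. Consequently every $M_k^{\dagger}M_k$ is diagonal, and the completeness relation $\sum_k M_k^{\dagger}M_k=I$ collapses to a family of scalar equations; comparing the two entries carrying the same copy-register label but opposite original-register labels gives $\sum_k|c_k|^2=1$ and $\tan^2\alpha\,\sum_k|c_k|^2=1$ simultaneously, impossible for $\alpha\in(0,\pi/4)$. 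The delicate step, where I expect the bulk of the work to lie, is carrying this out uniformly over all admissible permutation and sign patterns of the monomial operators; once the monomial reduction is secured, the completeness contradiction is immediate.
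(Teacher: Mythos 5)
Your treatment of the type~(\ref{3ghzI}) pairs is correct and genuinely different from the paper's: the Schmidt-rank bookkeeping on the two product vectors $\ket{0ij}$ and $\ket{1\overline{i}\,\overline{j}}$ across the $1|23$ cut does force the incompatible branch ratios $\cot\alpha$ and $-\tan\alpha$ for $0<\alpha<\pi/4$, and since the blank enters only through the bound ``Schmidt rank $\le 2$'' this single computation already delivers the paper's Theorem~2 (impossibility with an arbitrary three-qubit blank) as a by-product. The paper reaches that stronger statement by a different route, via three lemmas (no local unitary connects the pair, the Kay--Ericsson/Gheorghiu--Yu--Cohen necessary condition, and the one-ebit bound), so for that case your argument is more self-contained.

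The gap is in type~(\ref{3ghzII}), which you correctly identify as the crux but do not close. The claim that the two single-party operators on the two-party side of the cut must be generalized-permutation (monomial) operators does not follow from the cloning identities: those identities constrain each branch operator only on the two-dimensional subspace spanned by $\ket{\psi_0}\otimes\ket{B}$ and $\ket{\psi_0'}\otimes\ket{B}$, whereas the completeness relation $\sum_k M_k^{\dagger}M_k=I$ lives on the whole space, so ``every $M_k^{\dagger}M_k$ is diagonal'' is unsupported and the pair of equations $\sum_k|c_k|^2=1$ and $\tan^2\alpha\sum_k|c_k|^2=1$ is asserted rather than derived. Any correct completion must also use the non-maximality of the blank in an essential way, since the paper's Theorem~4 shows the same type-(II) pair \emph{is} LOCC-clonable when the blank is a GHZ state; your sketch makes no visible use of the blank's Schmidt coefficients. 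Finally, your premise that ``the impossibility cannot come from an entanglement-monotone count'' is exactly where the paper disagrees: it feeds the cloner the \emph{equal mixture} of the two admissible inputs and compares negativities of the mixed states, obtaining for type~(II) $N(\rho_{in})=\sin 2\alpha$ versus $N(\rho_{out})=\sin 2\alpha\,\sqrt{2-\sin^{2}2\alpha}>\sin 2\alpha$ (and an analogous strict increase for type~(I)). The monotone is conserved branch by branch on each pure input but not on the ensemble, and this short computation is what closes both cases in the paper.
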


\begin{proof} Without loss of generality we take
our blank copy to be of the form $|\Psi_{0ij}(\alpha)\rangle_{123}=
\cos \alpha |0~i~j\rangle+ \sin \alpha \ket{1~\overline{i}~
\overline{j}}$ (where, $0<\alpha< \pi/4$). Now any set (set containing more than one state)
of orthogonal three qubit CAT states contains at least one of the
pair (either pair(I) or pair(II)). A necessary condition for
cloning of a three qubit state under the usual LOCC (where each of the
three qubits of the state is independently manipulated) would be that:
``The states should remain copyable when two of the three qubits are
operated on jointly at one place whereas the third qubit undergoes a separate
local
operation at a different place and there can be classical communication between these two places".\\

Case-I: The first qubit is kept in
lab-A and the remaining two qubits are kept together in a different
lab(lab-B). Under this arrangement, any given pair of kind
(I),
for a proper choice of basis, can be written in the following bipartite form:\\

\begin{eqnarray}
\label{2ghzI}
\ket{\Psi_{0ij}(\alpha)}_{AB} &=& \cos \alpha \ket{0}_A
\ket{0}_B + \sin \alpha \ket{1}_A \ket{1}_B \nonumber\\
\ket{\Psi_{1ij}(\alpha)}_{AB} &=& \sin \alpha \ket{0}_A \ket{0}_B
- \cos \alpha \ket{1}_A \ket{1}_B
\end{eqnarray}

and the corresponding blank state reduces to the form:
$$|\Psi_{0ij}(\alpha)\rangle_{AB}= \cos \alpha
|0\rangle_A|0\rangle_B+ \sin \alpha \ket{1}_A \ket{1}_B$$(The
subscripts A and B indicates the laboratories occupying the
qubits and $0<\alpha< \pi/4$.)\\

To examine the LOCC cloning in this case we follow the same
procedure as in \cite{choudhary}. We assume the existence of a
cloner which, by LOCC between labs A $\&$ B, can clone a
pair $|\Psi_{0ij}(\alpha)\rangle_{AB}$ and
$|\Psi_{1ij}(\alpha)\rangle_{AB}$ given a known three qubit CAT
state ($|\Psi_{0ij}(\alpha)\rangle_{AB}$) with the same amount of
entanglement supplied to it as blank copy. If we now supply to the
cloner an equal mixture of $|\Psi_{0ij}(\alpha)\rangle_{AB}$ and
$|\Psi_{1ij}(\alpha)\rangle_{AB}$ together with the blank copy
$|\Psi_{0ij}(\alpha)\rangle_{AB}$, \emph{i.e.} if the input state to this LOCC-cloner is:\\
$$\rho_{in}=\frac{1}{2}P[|\Psi_{0ij}(\alpha)\rangle_{AB}\otimes |\Psi_{0ij}(\alpha)\rangle_{AB}]~+
~\frac{1}{2}P[|\Psi_{1ij}(\alpha)\rangle_{AB}\otimes |\Psi_{0ij}(\alpha)\rangle_{AB}],$$\\
then the output of the cloner will be
$$\rho_{out}=\frac{1}{2}P[|\Psi_{0ij}(\alpha)\rangle_{AB}\otimes
|\Psi_{0ij}(\alpha)\rangle_{AB}]~+~\frac{1}{2}P[|\Psi_{1ij}(\alpha)\rangle_{AB}\otimes
|\Psi_{1ij}(\alpha)\rangle_{AB}]$$
 Here $P[~.~]$ stands for projector.\\

To show impossibility of  LOCC-cloning of these states, we make
use of the fact that Negativity of a bipartite quantum state
$\rho$, namely $\emph{N}(\rho)$, cannot increase under LOCC
\cite{vidal}. $\emph{N}(\rho)$ is
given by \cite{zycz}\\
\begin{equation}
\label{nega}
 \emph{N}(\rho)\equiv \|\rho^{T_{B}} \|-1
\end{equation}
 where $\rho^{T_{B}}$ is the partial transpose with respect to
system B and $\|...\|$ denotes the trace norm which is defined as,
\begin{equation}
\label{partial}
\|\rho^{T_{B}}\|=tr(\sqrt{\rho^{T_{B}^{\dagger}}\rho^{T_{B}}}~)
\end{equation}

The negativity of the input state $\rho_{in}$ is
$$\emph{N}(\rho_{in})=\sin{2\alpha}$$
whereas the negativity of the output is
$$\emph{N}(\rho_{out})=\sin{2\alpha}~(\sin{2\alpha}+
2\cos{2\alpha})\hspace{3.2cm}.$$

One can easily check that $\emph{N}(\rho_{in})<
\emph{N}(\rho_{out})$ as,

\begin{eqnarray*}
1&<& \sin{2\alpha}+ 2\cos{2\alpha} ~~for~all ~~0<\alpha<\pi/4 \\
\sin{2\alpha}&<& \sin{2\alpha}~(\sin{2\alpha}+ 2\cos{2\alpha}) .
\end{eqnarray*}

Hence the above cloning is impossible.

\begin{figure}
%\vspace{-3cm}
\includegraphics{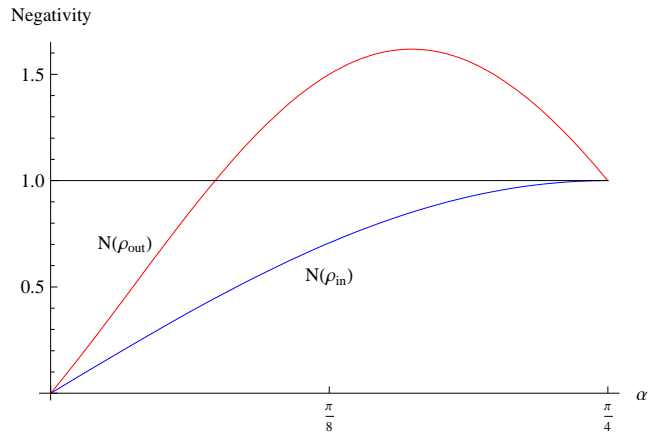}% Here is how to import EPS art
\caption{\label{fig:negativity2} (Color online) Red line: Plot of
negativity of the output state ($N(\rho_{out})$); Blue line: Plot
of negativity of input state ($N(\rho_{in})$). Negativity of
the output is more than that of the input except for the GHZ case (\emph{i.e.}
$\alpha=\frac{\pi}{4}$) in a given bipartite cut. For a large
range of $\alpha$ the negativity of the output state is more than 1.}
\end{figure}

For a large range of $\alpha$
($\frac{1}{2}\sin^{-1}{(\frac{1}{\sqrt{5}})}<\alpha<\frac{\pi}{4}$),
the negativity of the output state
$N(\rho_{out})=\sin{2\alpha}~(\sin{2\alpha}+ 2\cos{2\alpha})
>1$ (see Fig.
\ref{fig:negativity2}). This implies that even a three qubit GHZ
state (the negativity of a three qubit GHZ state is 1 in any
bipartition) cannot help as a blank copy for those values of
$\alpha$, if the set of states contain the pair (I). In the next
theorem we prove that LOCC cloning of a set of three
qubit CAT states for $0<\alpha<\frac{\pi}{4}$, and with the help of any three qubit
state as a blank
copy, is impossible if the set contains a pair of type (I).\\

Case-II: If the set contains a pair of type (II), then we can
always
find a bi-partition for a proper choice of basis such that the pair of type (II) is reduced to the form:\\

\begin{eqnarray}
\label{2ghzII}\ket{\Psi_{0ij}(\alpha)}_{AB} &=& \cos \alpha \ket{0}_A \ket{0}_B
+ \sin \alpha \ket{1}_A \ket{1}_B
\nonumber\\
\ket{\Psi_{0kl}(\alpha)}_{AB} &=& \cos \alpha \ket{0}_A \ket{1}_B
+ \sin \alpha \ket{1}_A \ket{0}_B \end{eqnarray}

and the corresponding blank state reduces to the form:\\
$$|\Psi_{0ij}(\alpha)\rangle_{AB}= \cos \alpha
|0\rangle_A|0\rangle_B+ \sin \alpha \ket{1}_A \ket{1}_B,$$
where, $0<\alpha< \pi/4$.\\

Let us assume that our cloning machine can clone
$|\Psi_{0ij}(\alpha)\rangle_{AB}$ and
$\ket{\Psi_{0kl}(\alpha)}_{AB}$ if a pure three qubit CAT state
$|\Psi_{0ij}(\alpha)\rangle_{AB}$ is used as a blank copy. So if
the density matrix supplied to this machine is

$$\rho_{in}=\frac{1}{2}P[|\Psi_{0ij}(\alpha)\rangle_{AB}\otimes |\Psi_{0ij}(\alpha)\rangle_{AB}]~+
~\frac{1}{2}P[\ket{\Psi_{0kl}(\alpha)}_{AB}\otimes |\Psi_{0ij}(\alpha)\rangle_{AB}],$$\\
then the output of the cloning machine is
$$\rho_{out}=\frac{1}{2}P[|\Psi_{0ij}(\alpha)\rangle_{AB}\otimes
|\Psi_{0ij}(\alpha)\rangle_{AB}]~+~\frac{1}{2}P[\ket{\Psi_{0kl}(\alpha)}_{AB}\otimes
\ket{\Psi_{0kl}(\alpha)}_{AB}]$$

 Putting $|\Psi_{0ij}(\alpha)\rangle_{AB}$ and $\ket{\Psi_{0kl}(\alpha)}_{AB}$
 in the expression for $\rho_{in}$ and
 $\rho_{out}$ and making use of Eqs. (\ref{nega}) and (\ref{partial}), we get

$$\emph{N}(\rho_{in})=\sin{2\alpha}$$

$$\emph{N}(\rho_{out})=\sqrt{\sin^2{2\alpha}~ (2-\sin^2{2\alpha})}$$

It is easy to check that $\emph{N}(\rho_{in})<
\emph{N}(\rho_{out})$ as,

\begin{eqnarray*}
\sin^4 2\alpha&<& \sin^2 2\alpha~~for~all ~~0<\alpha<\pi/4 \\
\sin^2 2\alpha&<& \sin^2 2\alpha ~(2-\sin^2 2\alpha) .
\end{eqnarray*}
Hence, the above cloning is impossible as it implies
$\emph{N}(\rho_{out})>
\emph{N}(\rho_{in})$ (see Fig. \ref{fig:negativity1}).\\

\begin{figure}
%\vspace{-3cm}
\includegraphics{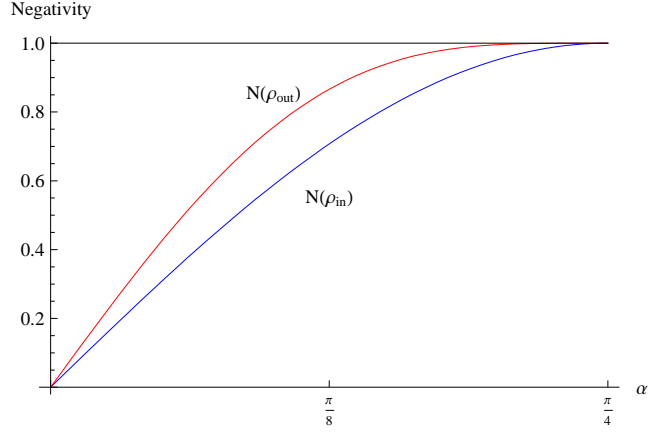}% Here is how to import EPS art
\caption{\label{fig:negativity1} (Color online) Red line: Plot of
negativity of the output state ($N(\rho_{out})$). Blue line: Plot
of negativity of the input state ($N(\rho_{in})$). Negativity of
output is more than that of the input except for the GHZ case (\emph{i.e.}
$\alpha=\frac{\pi}{4}$) in a given bipartition.}
\end{figure}
\end{proof}

 In the next section we see that LOCC cloning of pair (II) is
possible if we use a GHZ state as blank.\\

\begin{thm} No set of orthogonal three qubit CAT states
can be cloned by LOCC with the help of any three qubit
state, if the set contains the pair (\textbf{I}). \end{thm}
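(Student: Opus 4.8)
The plan is to keep the bipartite viewpoint of Case-I but to replace the negativity by a stronger, dimension-sensitive monotone, so that a contradiction is obtained uniformly in $\alpha$. Since every qualifying set contains a pair of type (I), and since (as recalled above) impossibility of LOCC cloning for the two cooperating parties $A$ and $BC$ already forces impossibility for three independent parties, I would again put qubit~1 in lab~A and qubits~2,3 together in lab~B, so that the pair reduces to the two states of \ref{2ghzI}, which I abbreviate $\ket{\psi_0}=\cos\alpha\ket{00}+\sin\alpha\ket{11}$ and $\ket{\psi_1}=\sin\alpha\ket{00}-\cos\alpha\ket{11}$, while an arbitrary three-qubit blank becomes an arbitrary bipartite blank $\ket B$ across the $A|BC$ cut.

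The monotone I would use is the Schmidt number $k(\rho)$ --- the least integer such that $\rho$ admits a pure-state ensemble all of whose Schmidt ranks are $\le k$ --- which is non-increasing under LOCC (Terhal--Horodecki). The decisive dimensional fact is that across the $A|BC$ cut the $A$-system is a single qubit, so \emph{every} pure three-qubit blank has Schmidt rank at most two there. Supplying the cloner the equal mixture of $\ket{\psi_0},\ket{\psi_1}$ together with $\ket B$ produces exactly the $\rho_{in}$ and $\rho_{out}$ of the previous proof; since $\rho_{in}=\tfrac12(\proj{00}+\proj{11})\otimes\proj{B}$ splits as a separable state tensored with a Schmidt-rank-$\le2$ blank, one reads off $k(\rho_{in})\le2$ for every choice of blank.

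It then remains to evaluate $k(\rho_{out})$ for $\rho_{out}=\tfrac12(\proj{\psi_0\psi_0}+\proj{\psi_1\psi_1})$. Because $\ket{\psi_0\psi_0}$ and $\ket{\psi_1\psi_1}$ are orthonormal, $\rho_{out}$ is proportional to the projector onto their two-dimensional span, so its admissible decompositions are precisely the isotropic ensembles on that span. Representing a general ray $a\ket{\psi_0\psi_0}+b\ket{\psi_1\psi_1}$ by its (diagonal) $4\times4$ coefficient matrix, I would check that the Schmidt rank equals $4$ everywhere except on three rays --- $a=b$ (rank $2$), and $a/b=-\tan^2\alpha$, $a/b=-\cot^2\alpha$ (rank $3$) --- and that the lone rank-$2$ ray cannot by itself reproduce a maximally mixed state on a two-dimensional space. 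Hence no ensemble achieves maximal rank $\le2$, whereas the three distinguished rays with suitable positive weights do form one of maximal rank $3$; thus $k(\rho_{out})=3$ for all $\alpha\in(0,\pi/4)$. Since $3>2\ge k(\rho_{in})$, no LOCC map can take $\rho_{in}$ to $\rho_{out}$, and the cloning is impossible for any three-qubit blank, which is the claim.

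The step I expect to be the main obstacle is the exact evaluation $k(\rho_{out})=3$: one must confirm both that the rank-$2$ locus is the single ray $a=b$ with nothing of lower rank present (giving $k\ge3$) and that the three special rays do close up, with positive probabilities, into the required isotropic ensemble (giving $k\le3$) --- a short computation with three coplanar Bloch vectors. If one prefers to cite rather than compute, the same Schmidt-rank-$\le2$ observation shows the blank is local-unitarily a genuine two-qubit ($\le1$ ebit) state, and the bipartite no-go of \cite{kay,choudhary} finishes the argument; but the Schmidt-number route is self-contained and uniform in $\alpha$, which is exactly what is needed to cover the small-$\alpha$ regime left open by the negativity bound.
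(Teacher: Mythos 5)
Your argument is correct, and it takes a genuinely different route from the paper. The paper reduces to the $A|BC$ cut and then invokes three lemmas: the one-sided-local-unitary necessary condition of \cite{kay,GYC10} for LOCC cloning with a maximally entangled blank, the fact that the pair (\ref{nmaxpair}) violates it (Lemma 1.1), and the observation that any three-qubit state carries at most one ebit across any bipartite cut (Lemma 1.3) --- essentially the ``cite rather than compute'' fallback you mention in your last paragraph. Your main route instead uses a single LOCC monotone, the Schmidt number, together with dimension counting: the key steps --- rewriting $\tfrac{1}{2}\bigl(\proj{\psi_0}+\proj{\psi_1}\bigr)=\tfrac{1}{2}\bigl(\proj{00}+\proj{11}\bigr)$ so that $k(\rho_{in})\le 2$ for \emph{any} blank whose $A$-side is one qubit, and checking that the only Schmidt-rank-$\le 2$ ray in the span of $\ket{\psi_0\psi_0}$ and $\ket{\psi_1\psi_1}$ is the single ray $a=b$, so no rank-two mixed state on that span admits a rank-$\le 2$ ensemble --- are both sound (only $k(\rho_{out})\ge 3$ is needed; the upper bound $k(\rho_{out})\le 3$ via the three coplanar Bloch vectors is a correct but dispensable refinement). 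What your version buys: it is self-contained (no appeal to the GHZ/W classification of \cite{dur} or to the external necessary condition of \cite{kay,GYC10}), it is manifestly uniform in $\alpha\in(0,\pi/4)$, and since the bound $k\le 2$ on the blank across the $1'|2'3'$ cut is purely dimensional it covers mixed as well as pure blanks, which the paper's class-by-class argument does not explicitly address. What the paper's version buys is brevity given the cited results, and a statement (the failure of one-sided unitary equivalence) that is reused elsewhere in the text.
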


\begin{proof} One needs an entangled blank state to clone an entangled
state otherwise, entanglement of the entire system will increase
under LOCC, which is impossible. We know that any genuine
tripartite entangled state can have entanglement either of the
GHZ-kind or W-kind \cite{dur}. Hence our blank copy can either
belong to the GHZ-class or belong to the W-class. Proof of this theorem
follows immediately from lemmas (2.1), (2.2), and
(2.3), stated below.\\

\begin{lem}\label{2.1} It is impossible to convert the bipartite
state $\ket{\psi}(=\cos{\alpha}\ket{00}+\sin{\alpha}\ket{11})$ to
the state $\ket{\phi}(=\sin{\alpha}\ket{00}-\cos{\alpha}\ket{11})$
by applying an unitary operator on one subsystem only where $0<\alpha<\frac{\pi}{4}$.\end{lem}

\begin{proof} Without loss of generality we choose a general unitary
operator acting only on the first subsystem. Then the unitary
acting on the whole system will be $U \otimes I$, where $I$ is the
$2\times 2$ identity operator acting on the second subsystem. Let
\begin{equation}\label{unonmax}
U \otimes I \ket{\psi} = \ket{\phi}
\end{equation}

for some $2\times 2$ unitary operator $U$ acting on the first
system. The general form of a $2\times 2$ unitary matrix is $U=
\left(\begin{array}{cc}
        a & \lambda b \\
        -b^*& \lambda a^*
        \end{array}
  \right) $, where $a, b , \lambda $ are complex and
  $|a|^2+|b|^2=1=|\lambda|$.\\
  If $\ket{\psi}=(U\otimes I) \ket{\phi}$ holds, then using
  simple algebra \cite{ramij} one can see that (\ref{unonmax})
  holds only when $\alpha=\pi/4$, meaning that the
above mentioned local convertibility is possible only
when both $\ket{\psi}$ and $\ket{\phi}$ are maximally entangled.\end{proof}

\begin{lem} \label{2.2} LOCC cloning of any bipartite set of
orthogonal states is impossible if the set contains both
$\ket{\psi}$ and $\ket{\phi}$ with $0<\alpha<\frac{\pi}{4}$ where\\
\begin{eqnarray}
 \label{nmaxpair}
\ket{\psi} &=&\cos{\alpha}\ket{00}+\sin{\alpha}\ket{11}\nonumber\\
\ket{\phi}&=&\sin{\alpha}\ket{00}-\cos{\alpha}\ket{11}
\end{eqnarray}\end{lem}

\begin{proof} If the above cloning is impossible with a maximally
entangled state as blank copy, then it is also impossible with the
help of any non-maximally blank state, as a non-maximally entangled
state can be produced from a maximal one by LOCC. It has been shown
recently in \cite{kay,GYC10} that the necessary
condition for LOCC cloning of a set of orthogonal bipartite
non-maximally pure entangled states $\{\ket{\psi_j}\}_{j=0}^n$
with the help of a maximally entangled state is
$$\ket{\psi_j}=(U_j\otimes I)\ket{\psi_0}$$
where the $U_j$ are unitary operators acting on the first system of
the bipartite system. Lemma \ref{2.1} says that the pair
(\ref{nmaxpair}) ($\{\ket{\phi}, \ket{\psi}\}$) can not be
expressed as $\ket{\psi}=(U\otimes I)\ket{\phi}$. Therefore, LOCC
cloning of $\ket{\psi}$ and $\ket{\phi}$ is impossible.\end{proof}

\begin{lem} \label{2.3} Any three qubit state can have at most one
ebit of entanglement in any bipartite cut.\end{lem}

For the type (I) pair given in (\ref{3ghzI}), we can always find a
bi-partition for a proper choice of basis such that the pair of
type (I) will be reduced to the form (\ref{2ghzI}), which is
equivalent to pair (\ref{nmaxpair}) ($\{\ket{\phi},\ket{\psi}\}$).
Lemma \ref{2.2} says that LOCC cloning of the pair (\ref{nmaxpair})
($\{\ket{\phi},\ket{\psi}\}$) is impossible with the help of a
two-qubit entangled state as blank (one ebit or less).
Lemma \ref{2.3} demands that any three qubit state has at most one ebit
of entanglement in any bipartite cut. Therefore LOCC cloning of the
pair (I) given in (\ref{3ghzI}) is impossible with the help of any
three qubit state.\end{proof}

LOCC cloning of 3-qubit orthogonal CAT states becomes trivial if there is no restriction on the supply of
free entanglement, e.g., if parties `1 \& 2' and parties `2 \& 3' shared a sufficient number of Bell states then teleportation can be used to bring all the subsystem
to the second party's lab and then to send back after cloning. If parties `1 \& 2' (or, `2 \& 3') shared just one copy of Bell state then LOCC cloning of the pair (\ref{3ghzI}) remains impossible as the supply of free entanglement in 1 vs. 23(or, 3 vs. 12) bipartite cut is one ebit which is not sufficient for pair (\ref{3ghzI}).

\section{Cloning of $n$-qubit CAT states}
The full orthogonal canonical set of $n(\ge 2)$-qubit CAT states can be
written as (up to a global phase):
\small{
\begin{equation}
\label{nGHZs}
\ket{\Psi_{i_1,i_2,i_3,.....,i_n}(\alpha)}_{123....n} =(\cos
\alpha)^{1-i_1} (\sin\alpha)^{i_1} ~\ket{0~i_2~i_3~......i_n} ~+~
(-1)^{i_1} ~(\cos \alpha)^{i_1} (\sin\alpha)^{1-i_1} ~
\ket{1~\overline{i_2}~ \overline{i_3}.......\overline{i_3}},
\end{equation}}
\normalsize
where $0< \alpha < \frac{\pi}{4}$ and $i_1, i_2, i_3,.....i_n \in \{ 0, 1\}$, and
the bar over a bit value indicates its logical negation.\\

By appropriate basis transformation any pair of the above (CAT)states can be reduced either to the form:\\

\noindent{\textbf{I}}
\begin{eqnarray}
\label{nghzI}\ket{\Psi_{0i_2 i_3....i_n}(\alpha)}_{123.....n} &=& \cos \alpha \ket{0~i_2~ i_3....i_n}
+ \sin \alpha \ket{1~\overline{i_2}~
\overline{i_3}.....\overline{i_n}}\nonumber\\
\ket{\Psi_{1i_2 i_3....i_n}(\alpha)}_{123.....n}&=& \sin \alpha
\ket{0~i_2~i_3.....i_n} - \cos \alpha \ket{1~\overline{i_2}~
\overline{i_3}.....\overline{i_n}}
\end{eqnarray}
where, $0< \alpha < \frac{\pi}{4}$ and
$i_2, i_3,....,i_n\in \{ 0, 1\}$,\\

~~~~~~~~~~~~~~~~~~~~~~~~~~~~~~~~~or to the form:\\

\noindent{\textbf{II}}
\begin{eqnarray}
\label{nghzII}\ket{\Psi_{0i_2i_3.....i_n}(\alpha)}_{123....n} &=& \cos \alpha
\ket{0~i_2~i_3.....i_n} + \sin \alpha \ket{1~\overline{i_2}~
\overline{i_3}.....\overline{i_n}}\nonumber\\
\ket{\Psi_{0j_2j_3....j_n}(\alpha)}_{123....n} &=& \cos \alpha
\ket{0~j_2~j_3.....j_n} + \sin \alpha \ket{1~\overline{j_2}~
\overline{j_3}.....\overline{j_n}} \end{eqnarray}
where, $0< \alpha < \frac{\pi}{4}$;  $
i_2, i_3,.....i_n,j_2,j_3,.....,j_n \in \{ 0, 1\}$ and
$(i_2,i_3,~....~,i_n)\ne(j_2,j_3,~....~,j_n)$.\\

\begin{thm} No set of n($\ge 2$)-qubit CAT-states can be cloned by LOCC
with the help of a $n$-qubit GHZ-state if the set
contains the pair(\textbf{I}) (given in
(\ref{nghzI})).\end{thm}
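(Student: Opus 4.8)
The plan is to collapse the $n$-party LOCC problem onto the bipartite situation already settled by Lemmas~\ref{2.2} and~\ref{2.3}. The guiding principle, noted in the Introduction, is that $n$ independent parties can do no better than the same parties after two of them have been merged into a single laboratory: if cloning is impossible for a coarser (fewer-party) partition, it remains impossible for the finer $n$-party one. It therefore suffices to exhibit a single bipartite cut across which cloning of the pair~(\ref{nghzI}) fails.

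First I would take the cut that isolates qubit~$1$ in lab~A and groups qubits $2,\dots,n$ together in lab~B. Since the strings $i_2 i_3\cdots i_n$ and $\overline{i_2}\,\overline{i_3}\cdots\overline{i_n}$ differ in every slot, the collective states $\ket{i_2 i_3\cdots i_n}$ and $\ket{\overline{i_2}\,\overline{i_3}\cdots\overline{i_n}}$ are orthogonal, so I may set $\ket{0}_B:=\ket{i_2 i_3\cdots i_n}$ and $\ket{1}_B:=\ket{\overline{i_2}\,\overline{i_3}\cdots\overline{i_n}}$. Under this local basis choice the pair~(\ref{nghzI}) reduces exactly to
\begin{eqnarray*}
\ket{\Psi_{0i_2\cdots i_n}(\alpha)}_{AB} &=& \cos\alpha\,\ket{0}_A\ket{0}_B + \sin\alpha\,\ket{1}_A\ket{1}_B,\\
\ket{\Psi_{1i_2\cdots i_n}(\alpha)}_{AB} &=& \sin\alpha\,\ket{0}_A\ket{0}_B - \cos\alpha\,\ket{1}_A\ket{1}_B,
\end{eqnarray*}
which is precisely the two-qubit pair $\{\ket{\psi},\ket{\phi}\}$ of~(\ref{nmaxpair}).

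Next I would check the blank across the same cut. Writing the $n$-qubit GHZ state as $\tfrac{1}{\sqrt2}(\ket{0\cdots0}+\ket{1\cdots1})$ and grouping qubits $2,\dots,n$ into B gives $\tfrac{1}{\sqrt2}(\ket{0}_A\ket{0\cdots0}_B+\ket{1}_A\ket{1\cdots1}_B)$; the two B-strings are orthogonal, so this is a Schmidt-rank-two state with equal weights, \emph{i.e.}\ exactly one ebit across the $1$ versus $2\cdots n$ cut (the $n$-qubit analogue of Lemma~\ref{2.3} guarantees it can be no more). Hence, across this cut, the task is exactly the cloning of $\{\ket{\psi},\ket{\phi}\}$ with a one-ebit blank.

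At this point I would simply invoke Lemma~\ref{2.2}, which states that cloning $\{\ket{\psi},\ket{\phi}\}$ of~(\ref{nmaxpair}) is impossible with a maximally entangled (one-ebit or less) blank. Since the GHZ blank contributes exactly one ebit here, cloning fails across the chosen cut, and by the coarsening principle it fails for the full $n$-party LOCC as well. I expect the only point needing care—rather than any hard computation—to be the bookkeeping of the reduction: one must confirm that a single local relabeling on B brings both members of~(\ref{nghzI}) into the effective $2\times2$ subspace while the independently supplied GHZ blank still furnishes a full ebit in the same cut, so that the hypotheses of Lemma~\ref{2.2} are genuinely met.
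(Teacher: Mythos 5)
Your proposal is correct and follows essentially the same route as the paper: coarse-grain to the $1$ versus $2\cdots n$ cut, observe that the pair~(\ref{nghzI}) reduces to the bipartite pair~(\ref{nmaxpair}) while the GHZ blank supplies only one ebit across that cut (the $n$-qubit analogue of Lemma~\ref{2.3}), and invoke Lemma~\ref{2.2}. Your write-up is in fact slightly more explicit than the paper's about why the B-side strings are orthogonal and why the relabeling is a legitimate local basis change.
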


\begin{proof} Cloning of these states by LOCC implies cloning of them in
any bipartition. Keeping this in mind we put the first qubit in
lab-A and the remaining $n-1$ qubit in lab-B. Under this
arrangement, any given pair of type (I) (given in
(\ref{nghzI})), for
a proper choice of basis, reduces to the form:\\

\begin{eqnarray}
\label{n2ghzII}
\ket{\Psi_{0i_2i_3....i_n}(\alpha)}_{123....n} &=& \cos \alpha
\ket{0}_A \ket{0}_B + \sin \alpha \ket{1}_A \ket{1}_B \nonumber\\
\ket{\Psi_{1i_2i_3....i_n}(\alpha)}_{123....n} &=& \sin \alpha
\ket{0}_A \ket{0}_B - \cos \alpha \ket{1}_A
\ket{1}_B,\end{eqnarray}
 which is equivalent to pair (\ref{nmaxpair}). Lemma \ref{2.3} also holds for any $n$-qubit GHZ-state (\emph{i.e.}, n-qubit GHZ
 state has a maximum of one ebit of entanglement in any bipartite cut). Therefore, using Lemma \ref{2.2} we conclude
 that LOCC cloning of pair (I) given in (\ref{nghzI}), and with the help of any $n$-qubit GHZ state as blank, is impossible. \end{proof}

\begin{thm} Any set ($S(\alpha)$, with $0< \alpha < \pi/4$) of orthogonal n($\ge 2$)-qubit
CAT states (having equal entanglement) can be cloned by LOCC with
the help of a n-qubit GHZ-state if the set ($S(\alpha)$)
does not contain a pair of kind (\textbf{I}), where $\alpha$ is
the entanglement parameter% and $\alpha=\pi/4$ implies that
%$S(\alpha)$ becomes a set of orthogonal
%n-qubit GHZ states
. \end{thm}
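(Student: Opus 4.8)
The plan is to complement Theorems 1--3 by producing an \emph{explicit} deterministic LOCC protocol. First I would invoke the classification established above: since every pair of orthogonal CAT states reduces to either form (\ref{nghzI}) or form (\ref{nghzII}), and to exactly one of them, a set $S(\alpha)$ containing no pair of kind \textbf{I} has all of its pairs of kind \textbf{II}. The classification then lets me write, after one common local change of basis (applied identically to the inputs and to the blank, and under which the GHZ blank stays maximally entangled --- one ebit --- in every bipartite cut), every member as $\ket{\Psi_{0,\vec{i}^{(m)}}(\alpha)} = \cos\alpha\,\ket{0\,\vec{i}^{(m)}} + \sin\alpha\,\ket{1\,\overline{\vec{i}^{(m)}}}$ with pairwise distinct labels $\vec{i}^{(m)}\in\{0,1\}^{n-1}$. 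The members thus occupy mutually orthogonal two--dimensional ``planes'' $\{\ket{0\,\vec{i}^{(m)}},\ket{1\,\overline{\vec{i}^{(m)}}}\}$, all carrying the identical Schmidt weights $(\cos\alpha,\sin\alpha)$, so the problem reduces to copying the classical label $\vec{i}^{(m)}$ coherently.

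The protocol I would exhibit has two stages. In Stage A each party $k$ ($1\le k\le n$) applies a \textsc{cnot} with its input qubit as control and its blank qubit as target. A short computation (tracking $\vec{y}\mapsto\vec{y}\oplus\vec{x}$ on the blank) shows that the input, being the control, is untouched, and that \emph{both} branches of the input drive the blank GHZ state $\tfrac{1}{\sqrt2}(\ket{0\cdots0}+\ket{1\cdots1})$ to the \emph{same} state $\tfrac{1}{\sqrt2}(\ket{0\,\vec{i}^{(m)}}+\ket{1\,\overline{\vec{i}^{(m)}}})$; consequently no entanglement is created between input and blank, and the label has been copied. In Stage B, now state--independent, party $1$ performs the fixed two--outcome local measurement $M_0=\cos\alpha\,\proj{0}+\sin\alpha\,\proj{1}$, $M_1=\sin\alpha\,\proj{0}+\cos\alpha\,\proj{1}$ on its blank qubit and broadcasts the result; on outcome $1$ every party applies $X$ to its blank qubit. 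This deterministically rescales the Schmidt weights from $(\tfrac12,\tfrac12)$ to $(\cos^2\alpha,\sin^2\alpha)$, turning the blank into the second copy $\ket{\Psi_{0,\vec{i}^{(m)}}(\alpha)}$ while leaving the input intact.

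The step I expect to be the crux --- and the place where both hypotheses (no kind \textbf{I} pair, GHZ blank) are genuinely used --- is verifying that the \emph{same} fixed operation clones every member. This works because standardisation puts the $\cos\alpha$ branch of each member on first--qubit value $0$, so the fixed $\{M_0,M_1\}$ together with the conditional $X^{\otimes n}$ correction realises the target uniformly for all $m$; and because Stage A only ever \emph{majorises} the maximal Schmidt vector $(\tfrac12,\tfrac12)$ down to $(\cos^2\alpha,\sin^2\alpha)$, which is LOCC--realisable (Nielsen) precisely since a CAT state holds at most one ebit per cut (Lemma \ref{2.3}) --- a blank with less entanglement in some cut would violate this majorisation there, which is why a full GHZ is needed. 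Finally I would note why kind \textbf{I} must be excluded: the transversal \textsc{cnot}s copy only the plane label $\vec{i}$ and are blind to the intra--plane parameter $p$, so two members sharing a plane (a kind \textbf{I} pair, with $p=0$ and $p=1$) would be sent by Stage A to the \emph{same} blank, after which no state--independent operation could recover two distinct copies --- exactly the obstruction quantified in Theorems 1--3.
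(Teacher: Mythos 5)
Your proposal is correct and is essentially the paper's own proof: the same two-stage Kay--Ericsson-style protocol of transversal \textsc{cnot}s from the unknown state onto the GHZ blank (copying the label $\vec{i}$ without entangling input and blank), followed by party 1's local filtering measurement $\{M_0,M_1\}$ and a conditional $X^{\otimes n}$ correction to dilute the blank to the CAT state. Your added commentary on why the hypotheses are needed goes slightly beyond the paper (with one harmless slip: the Schmidt-vector dilution happens in Stage B, not Stage A), but the protocol and its verification coincide with the paper's.
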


\begin{proof} To prove this theorem we demonstrate a protocol where
such a LOCC cloning is possible with the help of a $n$-qubit
GHZ-state. The protocol is similar to the protocol given by Kay
and Ericsson \cite{kay}, which was used for LOCC cloning of a set
of non-maximal bipartite states. It is easy to see that if the set
($S(\alpha)$, with $0<\alpha<\frac{\pi}{4}$) of orthogonal $n$-qubit CAT-states (having equal
entanglement) does not contain a pair of kind (I) then any state
$\ket{\Psi_{i_0i_2i_3....i_n}(\alpha)}_{123....n} \in S(\alpha)$
obeys the relation $$
\ket{\Psi_{0i_2i_3....i_n}(\alpha)}_{123....n} = (I\otimes
P_{i_2}\otimes P_{i_3}.....\otimes P_{i_n})\ket{\Psi_{000....0}
(\alpha)}_{123....n}$$ $$P_{i_j}=\sum_{k=0}^{1}{|i_j+k ~mod
~2\rangle\langle k|}$$
 where $j=2,3,...n$, and $i_2, i_3,....,i_n\in \{ 0, 1\}$.

The maximum cardinality of such a set($S(\alpha)$) of non-maximally entangled
orthogonal GHZ-states is $2^{n-1}$. So our goal is to perform the
transformation under LOCC such that,
$$\ket{\Psi_{i_0i_2i_3....i_n}
(\alpha)}_{123....n}\ket{\Psi_{000....0}(\pi/4)}_{123....n}
\rightarrow \ket{\Psi_{i_0i_2i_3....i_n}(\alpha)}_{123....n}
\ket{\Psi_{i_0i_2i_3....i_n}(\alpha)}_{123....n},$$ for all
$\ket{\Psi_{i_0i_2i_3....i_n}(\alpha)}_{123....n} \in S(\alpha)$,
where $\ket{\Psi_{000....0}(\pi/4)}_{123....n}$
is a n-qubit GHZ-state, which is our blank state.\\

To fulfill the above task we first perform the transformation
$$\ket{\Psi_{i_0i_2i_3....i_n}
(\alpha)}_{123....n}\ket{\Psi_{000....0}(\pi/4)}_{123....n}\rightarrow
\ket{\Psi_{i_0i_2i_3....i_n}(\alpha)}_{123....n}\ket{\Psi_{i_0i_2i_3....i_n}(\pi/4)}_{123....n}$$
where $\ket{\Psi_{i_0i_2i_3....i_n}(\pi/4)}_{123....n}=(I\otimes
P_{i_2} \otimes P_{i_3}.....\otimes
P_{i_n})\ket{\Psi_{000....0}(\pi/4)}_{123....n}$. This
transformation can be realised if all n parties apply the CNOT
operation on their respective qubit by taking the unknown original
state's qubit as source and the qubit  of the blank (GHZ) state as
target. Now we convert the GHZ state to the corresponding CAT
state. This is achieved by the first party applying the
measurement operators
$$M_0=\cos{\alpha}|0\rangle\langle 0|+
\sin{\alpha}|1\rangle\langle 1|,~ M_1=\sin{\alpha}|0\rangle\langle
0|+ \cos{\alpha}|1\rangle\langle 1|.$$

If the first party gets the result $k(=0,1)$, then all parties
apply the unitary operation $P_{k_j}$ on their respective qubits.
This completes the cloning protocol. \end{proof}
This theorem is also true for $\alpha=\pi/4$, \emph{i.e.}, for set of $n$-qubit GHZ states.\\

Choudhary \emph{et al.} \cite{choudhary1} give a protocol for LOCC cloning of a pair of maximal three qubit GHZ states.
So the important question is whether a similar protocol exists for a given pair of $n$-qubit maximal GHZ states or not.\\

\begin{thm} Any two orthogonal $n(\ge 2)$-qubit GHZ-states
can be cloned by LOCC with the help of a GHZ state.\end{thm}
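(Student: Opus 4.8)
The plan is to reduce to the two-case classification that has already been set up for pairs of $n$-qubit CAT states. By an appropriate local basis transformation every pair of orthogonal $n$-qubit GHZ states (i.e.\ the $\alpha=\pi/4$ instance of the CAT states) is brought either to the type (I) form (\ref{nghzI}) or to the type (II) form (\ref{nghzII}). If the pair is of type (II), nothing new is needed: Theorem~4 already exhibits an explicit GHZ-assisted LOCC cloning protocol for any set containing no pair of kind (I), and that protocol applies verbatim at $\alpha=\pi/4$. Hence the entire content of the statement lies in the type (I) case, which for $0<\alpha<\pi/4$ was proved impossible in Theorems~2 and~3 but becomes possible exactly at the maximally entangled point.

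Next I would explain why the type (I) obstruction disappears at $\alpha=\pi/4$. Lemma~\ref{2.1} shows that two states of the form $\frac{1}{\sqrt2}(\ket{0\,i_2\cdots i_n}\pm\ket{1\,\overline{i_2}\cdots\overline{i_n}})$ are interconvertible by a single local unitary only when they are maximally entangled; and indeed at $\alpha=\pi/4$ the two members of a type (I) pair differ merely by the local phase $\sigma_z$ on the first qubit. This is precisely the structure $\ket{\psi_j}=(U_j\otimes I)\ket{\psi_0}$ demanded by the necessary condition underlying Lemma~\ref{2.2}, so cloning is no longer forbidden. Concretely, since the indices $i_2,\dots,i_n$ are fixed and known (only the phase bit is unknown), I would first apply the local bit flips $X^{i_k}$ on qubits $k=2,\dots,n$ to standardise the unknown state to $\mathrm{GHZ}^{\pm}=\frac{1}{\sqrt2}(\ket{0\cdots0}\pm\ket{1\cdots1})$, the blank being $\mathrm{GHZ}^{+}$; these flips do not depend on the unknown phase.

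The copying step is then a purely local operation requiring no classical communication: each party applies a CNOT with its blank qubit as control and its original qubit as target. Writing $\sigma\in\{0,1\}$ for the unknown phase bit, a direct expansion gives
$$\mathrm{CNOT}^{\otimes n}_{\,b\to o}\,\bigl(\mathrm{GHZ}^{\sigma}_{o}\otimes\mathrm{GHZ}^{+}_{b}\bigr)=\mathrm{GHZ}^{\sigma}_{o}\otimes\mathrm{GHZ}^{\sigma}_{b},$$
after which the previously applied bit flips are undone on qubits $2,\dots,n$ of both copies, returning $\ket{\Psi_{\sigma i_2\cdots i_n}(\pi/4)}\otimes\ket{\Psi_{\sigma i_2\cdots i_n}(\pi/4)}$. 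This would complete the cloning of the type (I) pair.

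The step I expect to require the most care is verifying the displayed identity for general $n$: that after the $n$ controlled-NOTs the blank register decouples from the original and carries the same phase $\sigma$. The transparent way to see this is to pass to the $\pm$ (Hadamard) basis, in which $\mathrm{CNOT}_{b\to o}$ becomes a CNOT with the original qubit as control and the blank qubit as target, $\mathrm{GHZ}^{\sigma}$ is the equal superposition of all $n$-bit strings of parity $\sigma$, and the blank string is replaced by its bitwise XOR with the original string. For a fixed original string of parity $\sigma$, XOR-ing it into the set of all even-parity blank strings yields exactly the set of all strings of parity $\sigma$, independently of the original string; hence the blank factorises into $\mathrm{GHZ}^{\sigma}$ with no residual original–blank entanglement. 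This coset argument is the heart of the proof, and it is precisely the decoupling that fails for $0<\alpha<\pi/4$ (consistent with Theorem~3), so confirming it at $\alpha=\pi/4$ is the crux.
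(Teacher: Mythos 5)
Your proposal is correct and follows essentially the same route as the paper: reduce the pair to type (I) or type (II) by local basis change, handle type (II) with the original-to-blank CNOT protocol of Theorem~4, and handle type (I) with transversal CNOTs taking the blank qubits as controls and the original qubits as targets. The only difference is presentational—you make explicit the bit-flip standardisation and supply the Hadamard-basis parity/coset verification of the decoupling identity, which the paper asserts without calculation.
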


\begin{proof} With proper choice of basis transformation any pair of $n(\ge 2)$-qubit
GHZ state can be reduced either to the form
\\

\noindent{\textbf{I}}
\begin{eqnarray*}
\ket{\Psi_{0~0~0....0}}_{123....n} &=&\frac{1}{\sqrt{2}} [
\ket{0~0~0....0} + \ket{1~1~1....1}] \\
\ket{\Psi_{1~0~0....0}}_{123....n} &=&\frac{1}{\sqrt{2}} [
\ket{0~0~0....0} - \ket{1~1~1....1}]
\end{eqnarray*}
~~~~~~~~~~~~~~~~~~~~~~~~~~~~~~~~~or to the form:\\

\noindent{\textbf{II}}
\begin{eqnarray*}
\ket{\Psi_{0~0~0....0}}_{123....n} &=&\frac{1}{\sqrt{2}} [
\ket{0~0~0....0} + \ket{1~1~1....1}]\\
\ket{\Psi_{0i_2 i_3....i_n}}_{123....n} &=& \frac{1}{\sqrt{2}}[\ket{0i_2i_3....i_n}
+ \ket{1\overline{i_2}~
\overline{i_3}....\overline{i_n}}]
\end{eqnarray*}
where, $i_2, i_3,.....i_n \in \{ 0, 1\}$ and
$(i_2, i_3,.....,i_n) \neq (0,0,.....,0)$.\\

Without loss of generality, we assume that our blank state will be
$\ket{\Psi_{0~0~0....0}}_{123....n} =\frac{1}{\sqrt{2}} [
\ket{0~0~0....0} + \ket{1~1~1....1}]$ for both cases.\\

For pair (I), each the party applies CNOT operations to their
respective qubits separately by taking the blank state's qubit as
source and the original state's qubit as target to achieve the
required cloning. In contrast, for pair (II), each of them (all the parties) will apply the CNOT operations by taking the unknown original
state's qubit as source and blank state's qubit as target.\end{proof}

Although any two orthogonal $n$-qubit GHZ states are LOCC cloneable, for every $n$, there always exists three orthogonal $n$-qubit GHZ
states that, with the help of any $n$-qubit
GHZ state as blank, cannot be cloned by LOCC. To show this impossibility we choose
$\ket{\Psi_{000.....0}}_{123....n}$,
$\ket{\Psi_{010.....0}}_{123....n}$ and
$\ket{\Psi_{100.....0}}_{123....n}$. These three states are
equivalent to the three Bell states if we choose the bipartition in
such a way that the first qubit is kept in lab-A and the remaining $n-1$
qubits are kept together in lab-B. Now as even 1 free ebit is not
sufficient to clone three Bell states \cite{owari,choudhary} and
as the maximum bipartite entanglement that a $n$-qubit GHZ state can
have is one ebit, so we conclude that these $n$-qubit GHZ states,
with the help of any $n$-qubit GHZ ancilla  cannot be cloned by
LOCC.

\section{Discussion}

In this paper, we have discussed the problem of LOCC cloning of
CAT states. No set of orthogonal three qubit CAT states can be cloned under LOCC with the help of a known CAT state having the
same amount entanglement. The cloning remains impossible for the set of states having a pair of kind (\ref{3ghzI}) even if we supply any three qubit entangled state as
a blank copy. But LOCC cloning is possible for a particular type
of set (set does not contain a pair of kind (\ref{nghzI})) of orthogonal CAT states (having the same amount of
entanglement) with the help of a GHZ state. Any pair of n-qubit
CAT states of type (\ref{nghzI}) cannot be cloned with the help
of a n-qubit GHZ state
whereas any two n-qubit GHZ states can be cloned by LOCC with the aid of a known GHZ state.\\

\begin{acknowledgments}
The research has been supported by Norwegian Research Council,
Norway. The author thanks G. Kar for several discussions during
his visit at Indian Statistical Institute, Kolkata. The author also thanks M. G. Parker for stimulating discussions.
\end{acknowledgments}

\end{document}